\newcommand{\tr}{\mbox{\rm tr}}    
\newfont{\gothic}{eufm10 scaled\magstep0}
\newcommand{\rr}{\mbox{$\mathbb R$}}
\newcommand{\nn}{\mbox{$\mathbb N$}}
\newtheorem{theorem}{Theorem}[section]
\newtheorem{lemma}[theorem]{Lemma}
\newtheorem{definition}[theorem]{Definition}
\newtheorem{make_remark}[theorem]{Remark}
\begin{document}

\title{Observer design for position and velocity bias estimation\\from a single direction output}

\author{Florent Le~Bras, Tarek~Hamel, Robert~Mahony, Claude Samson,
\thanks{F. Le~Bras is with the French Direction G\'en\'erale de l'Armement (Technical Directorate), Bagneux, France, {\tt florent.le-bras@polytechnique.org}.}
\thanks{T. Hamel is with I3S UNS-CNRS, Nice-Sophia Antipolis, France, {\tt thamel@i3s.unice.fr}.}
\thanks{R. Mahony is with the School of Engineering, Australian National University, ACT, 0200, Australia, {\tt Robert.Mahony@anu.edu.au}}
\thanks{C. Samson is with INRIA and  I3S UNS-CNRS, Sophia Antipolis, France, {\tt claude.samson@inria.fr, csamson@i3s.unice.fr}. }}

\maketitle


\begin{abstract}
This paper addresses the problem of estimating the position of an object moving in $\rr^n$ from direction and velocity measurements. After addressing observability issues associated with this problem, a nonlinear observer is designed so as to encompass the case where the measured velocity is corrupted by a constant bias. Global exponential convergence of the estimation error is proved under a condition of persistent excitation upon the direction measurements. Simulation results illustrate the performance of the observer.


\end{abstract}

\section{Introduction}

There is is rich literature in vision based pose estimation driven by advances in the structure from motion in the field of computer vision \cite{2010_Haeming}.
Most of the recent structure from motion algorithms are formulated as an optimisation problem over a set of selected images  \cite{1999_Triggs_iccv}, however, recent work has emphasised the importance of considering motion models and filtering techniques \cite{2012_Strasdat} for a class of important problems.
Recursive filtering methods for vision based structure from motion and pose estimation themselves have a rich history primarily associated with stochastic filter design such as EKF, unscented filters and particle filters \cite{1989_Matthies,1996_Soatto_TAC,2007_Armesto,2008_Civera}.
A comparison of EKF and particle filter algorithms for vision based SLAM is available in \cite{2006_Bekris}.
Although nonlinear observer design does not provide a stochastic interpretation of the state estimate they hold the promise to handle the non-linearity of the vision pose estimation problem in a robust and natural manner \cite{RM_2009_Baldwin.icra}.
Ghosh and subsequent authors consider non-linear observers on the class of perspective systems \cite{RehGho2003_TAC,2004_Abdursul,Gho2004Aut,2006_Aguiar.TAC,Dahl20101829}, that is systems with output in a projective space obtained as a quotient of the state space.
Perspective outputs $y(x)$ are of the form
\[
y^P=(\frac{x_1}{x_n}, \ldots \frac{x_{n-1}}{x_n},1)
\]
and correspond to the nonlinear projection along rays through the origin onto an affine image plane perpendicular to the focal axis.
The output representation is attractive in that it corresponds to the normal representation of vision data for perspective cameras.
Indeed, there are a number of works that consider filtering for $y^P$ directly, rather than estimating the camera position \cite{dixon2003range,de2007line,dani2012globally}, corresponding to image tracking.
Although significant work has been based on this output representation, it tends to lead to complex observer and filter design and difficult analysis
\cite{RehGho2003_TAC,2004_Abdursul,Gho2004Aut,2006_Aguiar.TAC,Dahl20101829}.
An additional question of importance concerns the rate of convergence of an observer and recent work has addressed this question in the context of controlling the camera motion to improve observability of the problem and increase the rate of convergence of the observer \cite{2014_Spica}.

The present paper contributes further to the field of nonlinear observer design for systems with direction outputs.
The key contribution that we make is the development of an elegant and rigorous stability analysis for a simple filter design.
The filter is designed for a single bearing measurement and relies on the motion of the camera to generate persistence of excitation of the innovation in order to guarantee global asymptotic convergence.
Rather than using the perspective outputs favoured in previous papers we use direction outputs
\[
y = x /|x| = \frac{y^P}{|y^P|}
\]
corresponding to projection onto a virtual spherical image plane and differing from perspective outputs only in the scaling.
The two formulations are essentially equivalent from a systems perspective in the region where perspective outputs are defined. However, we believe that the direction output representation contributes to the simplicity of the observer proposed in the present paper.
We characterise the rate of convergence of the filter in terms of the persistence of excitation property.
We then consider the case when the measurement of velocity of the camera is perturbed by an unknown bias.
To the authors knowledge, this problem has not been considered in the nonlinear observer literature.
We provide a rigorous proof of the global asymptotic stability of the observer state for this case by exploiting a novel state transformation.
The simulations provided demonstrate the performance of the filter.

The paper is organised along five sections. Following the present introduction, Section~\ref{sec:prelim} introduces the system under consideration and points out observability properties attached to it.
Section~\ref{sec:observer} develops the main results of the paper.
Section~\ref{sec:simulations} present a few illustrative simulations. Concluding remarks are provided in Section~\ref{sec:conclusions}.

\section{Problem description}\label{sec:prelim}

The system considered is the kinematics of an object moving in $\mathbb{R}^{n}$
\begin{align}
\dot x &=v +a \label{eq:Kinematics} \\
y &=\frac{x}{|x|} \in S^{n-1} \label{eq:output}
\end{align}
where $v \in \rr^n$ is the velocity of the object and $a \in \rr^n$ represents any unknown bias.  Let $S^{n-1}$ denote the unit sphere, the space of measurements $y \in \rr^n$ such that\footnote{$|.|$ stands for the Euclidean norm of vectors and $||.||$ is the induced matrix norm.} $|y|=1$. An example of such a measurement is  the bearing in $S^{2}$  obtained from a camera looking at a moving object.

In most applications the unknown velocity $a \in \mathbb{R}^{n}$ (with $n=3$) represents the velocity of the fluid in which evolves the moving object or/and any bias that affects the measurement of $v$.
In this paper, for the sake of generality, we consider an arbitrary dimension $n\geq2$.
We emphasize that the value of $y \in S^{n-1}$ and $v \in \mathbb{R}^n$ must be known at all times.


\subsection{Observability analysis}\label{:sub:notation}

We first give a general observability criterion. The following persistency of excitation condition will then yield
an observability result for system (\ref{eq:Kinematics}-\ref{eq:output}).
\begin{definition}\label{def:persis}
  The direction  $y \in
  S^{n-1}$, is called \emph{persistently exciting} if there
  exist $\delta>0$ and $0<\mu<\delta$  such that for all $t$
 \begin{equation}
    \label{eq:persistence1}
    \int_{t}^{t+\delta}
      \pi_{y(\tau)} \mathrm{d}\tau \geq \mu I, \; \mbox{with } \pi_y=(I-yy^{\top})
  \end{equation}
\end{definition}
For future use, note that \eqref{eq:persistence1} is equivalent to
  \begin{equation}
 \label{eq:persistence2}
 \forall b \in S^{n-1}~:~\int_{t}^{t+\delta}
     |\pi_{y(\tau)}b|^2 \mathrm{d}\tau \geq \mu,
  \end{equation}
Another characterization of persistent of excitation, in terms of the property that the time-derivative of $\dot{y}$ must satisfy, is pointed out in the following lemma
\begin{lemma}
  \label{lem:PE2}
Assume that $\dot{y}(t)$ is uniformly continuous and bounded, then relation \eqref{eq:persistence1} (respectively \eqref{eq:persistence2}) is equivalent to:
\begin{equation}\label{eq:persistence4}
\exists \varepsilon>0, \; \exists  \tau \in [t,t+\delta] \mbox{ such that } |\dot{y}(\tau)| \geq \varepsilon
\end{equation}
\end{lemma}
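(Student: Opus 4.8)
The plan is to prove the two implications separately, reading \eqref{eq:persistence4} as holding uniformly in $t$ (one $\varepsilon$ serving for every window), and to lean throughout on three elementary facts: since $|y|\equiv 1$ one has $y^{\top}\dot y=0$, hence $\dot y=\pi_y\dot y$; for a unit $b$, $b^{\top}\pi_y b=1-(y^{\top}b)^2=|\pi_y b|^2$; and the chord identity $1-y(\tau)^{\top}y(s)=\tfrac12|y(\tau)-y(s)|^2$. By the equivalence of \eqref{eq:persistence1} and \eqref{eq:persistence2} it suffices, on each window $[t,t+\delta]$, to bound $\int_t^{t+\delta}\bigl(1-(y^{\top}b)^2\bigr)\,d\tau$ below by a constant $\mu>0$ that is uniform over $b\in S^{n-1}$ and over $t$.

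The implication \eqref{eq:persistence1}$\Rightarrow$\eqref{eq:persistence4} I would obtain by contraposition, and it uses no regularity. If \eqref{eq:persistence4} fails, then for every $\varepsilon>0$ some window carries $|\dot y|<\varepsilon$ throughout; integrating gives $|y(\tau)-y(t)|<\varepsilon\delta$ on it, so the chord identity yields $1-(y(\tau)^{\top}y(t))^2<\varepsilon^2\delta^2$ and therefore $\int_t^{t+\delta}|\pi_{y(\tau)}y(t)|^2\,d\tau<\varepsilon^2\delta^3$. Given any $\mu$, choosing $\varepsilon$ with $\varepsilon^2\delta^3<\mu$ defeats \eqref{eq:persistence2} for the direction $b=y(t)$; as $\mu$ was arbitrary, \eqref{eq:persistence1} fails.

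The substantive direction is \eqref{eq:persistence4}$\Rightarrow$\eqref{eq:persistence1}, where uniform continuity is essential. First I would use it to upgrade the single instant $\tau^{*}\in[t,t+\delta]$ with $|\dot y(\tau^{*})|\ge\varepsilon$ to a subinterval $J\subseteq[t,t+\delta]$ of fixed positive length $\ell$ (depending on $\varepsilon$ and the modulus of continuity, not on $t$) on which $|\dot y(\tau)-w|\le\varepsilon/2$ with $w:=\dot y(\tau^{*})$, whence $|\dot y|\ge\varepsilon/2$ and, by integrating $\dot y$, the displacement lower bound $|y(\tau)-y(\tau^{*})|\ge\tfrac{\varepsilon}{2}|\tau-\tau^{*}|$ on $J$; this last estimate is the workhorse. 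It then remains to bound $\int_J\bigl(1-(y^{\top}b)^2\bigr)\,d\tau$ below uniformly in $b$, via a dichotomy. If $b$ is not too aligned with $y(\tau^{*})$, say $|y(\tau^{*})^{\top}b|\le 1-\rho$, then boundedness $|\dot y|\le M$ keeps $y^{\top}b$ away from $\pm1$ on a subinterval, giving $1-(y^{\top}b)^2\ge\rho/2$ there. If instead $b\approx\pm y(\tau^{*})$, I combine the chord identity and the displacement bound into $1-y(\tau)^{\top}y(\tau^{*})\ge\tfrac{\varepsilon^2}{8}(\tau-\tau^{*})^2$ and transfer it to $b$ at the cost of an $O(\sqrt{\rho})$ error, so that $1-(y^{\top}b)^2\gtrsim\varepsilon^2(\tau-\tau^{*})^2$ once $|\tau-\tau^{*}|$ passes an $O(\rho^{1/4}/\varepsilon)$ threshold. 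Integrating each case over $J$ and taking the worse bound yields a single $\mu>0$, uniform in $t$.

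The crux is exactly the aligned case $b\approx\pm y(\tau^{*})$: a direction that spins rapidly on a tiny circle about $b$ keeps $|\dot y|$ large, so \eqref{eq:persistence4} holds, yet makes $1-(y^{\top}b)^2$ uniformly small, which would drive $\mu$ to zero. What forbids this is that uniform continuity caps the rotation rate of $\dot y$, forcing the radius of any such circle to be bounded below in terms of $\varepsilon$ and $\ell$ — precisely what the displacement bound encodes, and the one place the hypothesis cannot be dropped. I would close by verifying that $\ell$, $\rho$, $M$ and the threshold are all chosen independently of $t$, so that the resulting $\mu$ is uniform as Definition~\ref{def:persis} demands.
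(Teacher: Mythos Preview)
Your argument is correct, and in the easy direction \eqref{eq:persistence1}$\Rightarrow$\eqref{eq:persistence4} it is essentially the paper's contrapositive with the chord identity in place of a componentwise mean-value estimate; the content is the same, your packaging is slightly cleaner.

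In the substantive direction \eqref{eq:persistence4}$\Rightarrow$\eqref{eq:persistence1} you and the paper part ways. Both begin by using uniform continuity of $\dot y$ to thicken the single instant $\tau^{*}$ into a subinterval of $t$-independent length on which $\dot y$ stays near $w=\dot y(\tau^{*})$. From there the paper only extracts the macroscopic displacement $|y(t_2)-y(t_1)|\ge\epsilon_2>0$ and finishes by a compactness argument on $(S^{n-1})^3$: with $|y(t_1)-y(t_2)|\in[\epsilon_2,1.5]$, the continuous function $\max_i\bigl(1-(y(t_i)^{\top}b)^2\bigr)$ is strictly positive, hence uniformly so, and boundedness of $\dot y$ converts this endpoint bound into an integral bound. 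You instead keep the pointwise displacement inequality $|y(\tau)-y(\tau^{*})|\ge\tfrac{\varepsilon}{2}|\tau-\tau^{*}|$ and run an explicit dichotomy on the alignment of $b$ with $y(\tau^{*})$, obtaining a constructive $\mu$ in terms of $\varepsilon$, the modulus of continuity, and $M=\sup|\dot y|$. What the paper's route buys is brevity---the compactness step replaces your two-case analysis and the balancing of $\rho$ against $\ell$. What your route buys is an explicit, quantitative $\mu$ and the geometric picture you sketch at the end (uniform continuity forbids fast spinning on a small circle about $b$), which the paper's soft argument hides. Both are valid; if you want to streamline, you could replace your dichotomy by the paper's compactness step, and if you want to sharpen the paper, your displacement bound is the natural tool.
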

~\\
\begin{proof}The proof of this lemma is given in Appendix \label{Plem:PE2}
\end{proof}
Note that the uniform continuity and boundedness of $\dot{y}$ is automatically granted when $v$ is itself uniformly continuous and bounded and $|x|$ is lower bounded by some positive number.

Recall that two different points $x_1^0,x_2^0\in \mathbb{R}^n $ are said distinguishable, if there exists an input
$v(t) \in \rr^n$ and a time $t_1$ such that for solutions
$x_1(t),x_2(t)$ of \eqref{eq:Kinematics} with $x_1(0)=x_1^0$, $x_2(0)=x_2^0$ we have $y(x_1(t_1))\not=y(x_2(t_1))$.
Equivalently, in this case one says that the admissible input distinguishes the
two initial states, and also that two initial states of system~(\ref{eq:Kinematics}-\ref{eq:output}) are indistinguishable if they are
not distinguished by any admissible input.
\begin{definition}
A system is called \emph{strongly observable} if all pairs of distinct initial states are distinguished by all
admissible inputs. It is called \emph{weakly observable} if every pair of distinct
initial states is distinguished by at least one admissible input.
\end{definition}

Reasons to differentiate between strong observability and weak
observability are well explained in the non-linear control literature. For complementary details on this subject we refer the reader
to a classical work by Sussmann~\cite{Sussmann}.

\begin{lemma}
The system (\ref{eq:Kinematics}-\ref{eq:output}) complemented with the equation $\dot{a}=0$, with $X=\left( \begin{array}{c} x \\ a \end{array} \right)$ as the system state vector, $v$ as the system input, and $y$ as the system output, is weakly observable but not strongly observable.
\end{lemma}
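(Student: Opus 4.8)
The plan is to reason directly from the closed-form solutions of the augmented system. Since $\dot a = 0$, we have $a_i(t)\equiv a_i^0$, and writing $V(t):=\int_0^t v(\tau)\,\mathrm d\tau$ the position trajectories are
\[
x_i(t)=x_i^0+a_i^0\,t+V(t),\qquad i=1,2 .
\]
The key structural remark is that the input $V(t)$ enters both trajectories in the same way, so the difference
\[
d(t):=x_1(t)-x_2(t)=(x_1^0-x_2^0)+(a_1^0-a_2^0)\,t
\]
does not depend on $v$; it is an affine function of $t$ that is not identically zero exactly when $X_1^0\neq X_2^0$. Using $x_1=x_2+d$ and the projector $\pi_{y}=I-yy^{\top}$ from \eqref{eq:persistence1}, one has $\pi_{y_2(t)}x_1(t)=\pi_{y_2(t)}d(t)$, whence $y_1(t)=y_2(t)$ forces $\pi_{y_2(t)}d(t)=0$, i.e. the fixed vector $d(t)$ must be aligned with the input-steerable vector $x_2(t)$. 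Distinguishing the two states thus reduces to arranging $\pi_{y_2(t_1)}d(t_1)\neq0$ at some time $t_1$.

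To disprove strong observability I would exhibit one admissible input that fails to separate a pair of distinct states. Taking $v\equiv0$, $a_1^0=a_2^0=0$, $x_1^0=u$ and $x_2^0=2u$ for some unit vector $u$, the trajectories are constant and $y_1(t)=y_2(t)=u$ for all $t$, although $X_1^0\neq X_2^0$. This is the classical range (scale) ambiguity of bearing-only observation of a motionless target, and it immediately shows the system is not strongly observable.

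For weak observability I would, for arbitrary distinct $X_1^0,X_2^0$, build a single distinguishing input. Fix $t_1>0$ with $d(t_1)\neq0$ (at most one positive time is excluded, since $d$ is affine and nonzero), and, using $n\geq2$, pick $z\in\rr^n\setminus\{0\}$ not collinear with $d(t_1)$. Choosing the constant input $v\equiv\bigl(z-x_2^0-a_2^0t_1\bigr)/t_1$ yields $V(t_1)=z-x_2^0-a_2^0t_1$, hence $x_2(t_1)=z$ and $x_1(t_1)=z+d(t_1)$; by construction $\pi_{y_2(t_1)}d(t_1)\neq0$, so $y_1(t_1)\neq y_2(t_1)$. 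The main obstacle I anticipate is purely the well-posedness of the output: one must ensure $x_1(t_1),x_2(t_1)\neq0$ so that $y_i(t_1)$ is defined (and, if one insists on $y$ being defined along the whole motion, that the straight-line trajectories avoid the origin). Since an affine trajectory meets $0$ at most once, a generic choice of $z$—and, if needed, a small adjustment of $t_1$—removes this degeneracy, completing the weak-observability argument.
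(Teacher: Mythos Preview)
Your argument is correct. For the failure of strong observability you and the paper do essentially the same thing: a constant input keeps $x_1(t)$ and $x_2(t)$ positively collinear for all time when the initial positions are positive multiples of the common drift direction; your choice $v\equiv0$, $a_1^0=a_2^0=0$ is simply the cleanest instance of the paper's family $x_i(0)=k_i(v+a)$.

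For weak observability the routes genuinely differ. The paper exhibits a \emph{single} sinusoidal input $v(t)=(\cos t,\sin t,0,\ldots,0)^{\top}$ and asserts (without writing out the verification) that it distinguishes every pair of initial states---a universal distinguishing input, which as a bonus renders $y$ persistently exciting in the sense of Definition~\ref{def:persis}. You instead exploit the structural fact that the difference $d(t)=x_1(t)-x_2(t)$ is input-independent and affine in $t$, and for each pair $X_1^0\neq X_2^0$ you manufacture a constant input steering $x_2(t_1)$ to a point $z$ not collinear with $d(t_1)$, so that $\pi_{y_2(t_1)}d(t_1)\neq0$ forces $y_1(t_1)\neq y_2(t_1)$. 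Your construction is more elementary and fully rigorous (including the well-posedness caveat $x_i\neq0$), and it is exactly what the definition of weak observability requires; the paper's approach, if the verification were spelled out, would give the mildly stronger conclusion that one input works for all pairs at once and would tie the result back to the persistence-of-excitation theme of Section~\ref{sec:observer}.
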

\begin{proof}
Choose, for instance, the input $v(t)=(\cos(t),\sin(t),0 \ldots 0)^T$. The solutions to the system are then given by $x(t)=x(0)+(\sin(t)+at,-\cos(t)+at,0 \ldots 0)^T$ and one easily verifies that $y_1(t)=y_2(t)$, $\forall t$, implies that $x_1(0)=x_2(0)$. This establishes the weak observability property of the system. Note also that the chosen input renders both outputs $y_1(t)$ and $y_2(t)$ persistently exciting in the sense of the definition \eqref{def:persis}. On the other hand, one verifies that, if the input $v$ is constant, then initial states $x_1(0)=k_1(v+a)$ and $x_2(0)=k_2(v+a)$, with $k_1$ and $k_2$ denoting arbitrary positive numbers, can not be distinguished because $y_1$ and $y_2$ are constant and equal in this case. This proves that the system is not strongly observable.
\end{proof}
The weak observability property of the system justifies the introduction of the persistence condition evoked previously to characterize "good" outputs (produced by "good" inputs) yielding a property of "uniform" observability that renders the state-observation problem addressed in the next section well-posed.

\section{Observer design}\label{sec:observer}
The problem of state observation refers to the design of an algorithm that allows one to recover actual state values from the observation of previous outputs.
We start by the observer design for the classical situation addressed in the literature where the unknown constant velocity bias $a$ is equal to zero. The situation when this term is different from zero and unknown {\em a priori} is addressed subsequently.

  \begin{lemma}
  \label{th:complementary_convergence}
  Consider the system~(\ref{eq:Kinematics}-\ref{eq:output}) and
  the  following observer:
  \begin{equation} \label{filter}
  \dot{\hat{x}}_1=v -k \pi_y \hat{x}_1, \hat{x}_1(0)=\hat{x}_1^0 \in \mathbb{R}^n \mbox{ and } k>0
  \end{equation}
 Assume that $a \equiv 0$, $x$ is bounded and never crosses zero, so that the output $y$ is always well defined. Let $\tilde{x}_1=x-\hat{x}_1 \in \mathbb{R}^n$ denote the estimation error.
  If $v(t) \in \mathbb{R}^n$ is bounded and such that the measured direction  $y(t)$ is
 persistently exciting, then the equilibrium $\tilde{x}_1=0$ is Uniformly Globally
  Exponentially Stable (UGES).
\end{lemma}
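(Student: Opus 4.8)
The plan is to show that the estimation error obeys an \emph{autonomous} linear time-varying equation to which the persistence of excitation hypothesis applies directly. Differentiating $\tilde{x}_1 = x - \hat{x}_1$ and using \eqref{eq:Kinematics} with $a\equiv 0$ together with \eqref{filter} gives $\dot{\tilde{x}}_1 = k\pi_y\hat{x}_1$. The crucial simplification is that $\pi_y x = (I-yy^{\top})x = 0$, since $y=x/|x|$ is collinear with $x$; hence $\pi_y\hat{x}_1 = -\pi_y\tilde{x}_1$ and the error dynamics reduce to
\begin{equation}\label{eq:errdyn}
\dot{\tilde{x}}_1 = -k\,\pi_{y(t)}\,\tilde{x}_1,
\end{equation}
a linear time-varying system driven only by the measured direction $y(t)$, with $P(t):=k\pi_{y(t)}$ symmetric positive semidefinite and bounded ($\|\pi_y\|=1$).

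Next I would introduce the candidate Lyapunov function $V=\tfrac12|\tilde{x}_1|^2$. Since $\pi_y$ is an orthogonal projector, $\dot{V}=-k\,\tilde{x}_1^{\top}\pi_y\tilde{x}_1 = -k\,|\pi_y\tilde{x}_1|^2\le 0$, which already yields uniform stability and monotonicity of $V$ (in particular $|\tilde{x}_1(\tau)|\le|\tilde{x}_1(t)|$ for $\tau\ge t$). The obstruction to exponential decay is that $\pi_y$ is only positive \emph{semi}definite, with null direction along $y$: $\dot{V}$ vanishes whenever $\tilde{x}_1$ is momentarily aligned with $y(t)$, so decrease can only be extracted over a time window, and this is exactly where \eqref{eq:persistence2} must enter.

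The core of the argument --- and the step I expect to be the main obstacle --- is to turn the integrated dissipation over a window $[t,t+\delta]$,
\begin{equation}\label{eq:dissip}
V(t)-V(t+\delta) = k\!\int_t^{t+\delta}\!|\pi_{y(\tau)}\tilde{x}_1(\tau)|^2\,\mathrm{d}\tau,
\end{equation}
into a strict contraction $V(t+\delta)\le\rho\,V(t)$ with $\rho\in(0,1)$. The persistence condition \eqref{eq:persistence2} bounds below the integral of $|\pi_{y(\tau)}b|^2$ for a \emph{fixed} vector $b$, so the difficulty is to transfer this to the time-varying trajectory $\tilde{x}_1(\tau)$. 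I would do this by freezing the state at $\tilde{x}_1(t)$ and using the elementary splitting $a^{\top}Pa\ge\tfrac12 b^{\top}Pb-(a-b)^{\top}P(a-b)$, valid for $P\succeq 0$, with $a=\tilde{x}_1(\tau)$ and $b=\tilde{x}_1(t)$. After integration the first term is lower bounded by $\tfrac{k\mu}{2}|\tilde{x}_1(t)|^2$ via \eqref{eq:persistence1}, while the deviation term is controlled using $\tilde{x}_1(\tau)-\tilde{x}_1(t)=-\int_t^{\tau}k\pi_{y(s)}\tilde{x}_1(s)\,\mathrm{d}s$ and a Cauchy--Schwarz estimate that bounds it by a multiple of the same integral \eqref{eq:dissip}.

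Collecting terms, this yields $V(t)-V(t+\delta)\ge\frac{k\mu}{1+k^2\delta^2}\,V(t)$, hence a uniform contraction factor $\rho=1-\frac{k\mu}{1+k^2\delta^2}$; the hypotheses $\mu<\delta$ and the inequality $1+k^2\delta^2\ge 2k\delta$ guarantee $\rho\in(0,1)$ for \emph{every} $k>0$, so no smallness assumption on the gain is required. Finally, iterating $V\big((m+1)\delta+t_0\big)\le\rho\,V\big(m\delta+t_0\big)$ and using monotonicity of $V$ between sampling instants gives $V(t)\le\tfrac1\rho\,e^{-\lambda(t-t_0)}V(t_0)$ with $\lambda=-\tfrac1\delta\ln\rho>0$, uniformly in $t_0$. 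Since $V=\tfrac12|\tilde{x}_1|^2$ and the estimate is independent of the initial condition, this is precisely the claimed UGES of $\tilde{x}_1=0$.
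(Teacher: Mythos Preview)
Your derivation of the error dynamics $\dot{\tilde{x}}_1=-k\,\pi_{y(t)}\tilde{x}_1$, including the key observation $\pi_y x=0$, is exactly what the paper does. From that point on the paper simply invokes an external result (\cite[Lemma~5]{loria2002uniform}) to conclude UGES and quotes the transition-matrix bound $\|\Phi(t,\tau)\|\le e^{-\gamma(t-\tau)}$ with $\gamma=\dfrac{\mu k}{\delta(1+k^2\delta)^2}$; you instead supply a self-contained Lyapunov/contraction argument over windows of length $\delta$, freezing the state and controlling the deviation via Cauchy--Schwarz. Both routes are standard and valid; yours has the advantage of being elementary and not relying on an outside reference, while the paper's citation buys a slightly sharper explicit rate (note their constant differs from your $\rho$, which is immaterial for the UGES conclusion). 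Your check that $\mu<\delta$ together with $1+k^2\delta^2\ge 2k\delta$ forces $\rho\in(0,1)$ for every $k>0$ is correct and worth keeping, since it shows no smallness assumption on the gain is needed.
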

\begin{proof}
Differentiating $\tilde{x}$ and using \eqref{eq:Kinematics} and \eqref{filter} one gets the following linear time varying system:
\[
    \dot{\tilde{x}}_1 =k \pi_y \hat{x}=-k \pi_y \tilde{x}_1, \]

Using the assumption of persistent excitation characterized by relation \eqref{eq:persistence1} a direct application of \cite[Lemma 5]{loria2002uniform} proves that $\tilde{x}_1$ is UGES. More explicitly, one verifies that the transition matrix $\Phi$ associated with the above system satisfies
\begin{equation} \label{transition-bounds}
\exp^{-k(t-\tau)}\leq ||\Phi(t,\tau)||\leq \exp^{-\gamma (t-\tau)},
\end{equation}
where $\gamma=\frac{\mu k}{\delta(1+k^2\delta)^2}$.
 \end{proof}

The interest of this result lies in the extreme simplicity of the observer design and, more importantly, in the property of global stability and explicit bounds on the convergence rate of the observer.

\begin{lemma}
  \label{lemma:boundedness}
  Consider the system~(\ref{eq:Kinematics}-\ref{eq:output}) and
  the above filter \eqref{filter}. If
  \begin{itemize}
  \item $v \in \mathbb{R}^n$ is bounded and such that the measured direction  $y$ is
 persistently exciting,
 \item $x$ is bounded and never crosses zero, and
 \item $a \in \rr^n$ is constant
\end{itemize}
 then $|\tilde{x}|$ (and hence $|\hat{x}_1|$) is uniformly bounded w.r.t. initial conditions and ultimately bounded by $\frac{1}{\gamma}|a|$.
\end{lemma}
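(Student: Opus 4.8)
The plan is to derive the error dynamics, recognize them as the exponentially stable linear time-varying system of Lemma~\ref{th:complementary_convergence} driven by the constant term $a$, and then apply a standard variation-of-constants estimate.

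First I would differentiate $\tilde x_1 = x - \hat x_1$ using \eqref{eq:Kinematics} and \eqref{filter}, now keeping $a \neq 0$. The crucial observation is that, since $y = x/|x|$, the identity $\pi_y x = (I - yy^\top)x = 0$ holds \emph{irrespective} of $a$; hence $\pi_y \hat x_1 = \pi_y(x - \tilde x_1) = -\pi_y \tilde x_1$, exactly as in the unbiased case. Substituting gives
\[
\dot{\tilde x}_1 = a - k\pi_y \tilde x_1,
\]
which is precisely the homogeneous dynamics analysed in Lemma~\ref{th:complementary_convergence} augmented by the constant forcing term $a$. Thus the bias does not perturb the stable operator $-k\pi_y$; it enters purely as an additive input.

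Next I would write the solution through the variation-of-constants formula in terms of the transition matrix $\Phi(t,\tau)$ of $\dot{\tilde x}_1 = -k\pi_y \tilde x_1$,
\[
\tilde x_1(t) = \Phi(t,0)\,\tilde x_1(0) + \int_0^t \Phi(t,s)\,a\,\mathrm{d}s.
\]
Invoking the bound $\|\Phi(t,\tau)\| \leq \exp^{-\gamma(t-\tau)}$ from \eqref{transition-bounds}, with $\gamma = \frac{\mu k}{\delta(1+k^2\delta)^2}$, and using submultiplicativity of the induced norm together with $|a|$ constant, I would obtain
\[
|\tilde x_1(t)| \leq \exp^{-\gamma t}|\tilde x_1(0)| + |a|\int_0^t \exp^{-\gamma(t-s)}\,\mathrm{d}s = \exp^{-\gamma t}|\tilde x_1(0)| + \frac{|a|}{\gamma}\bigl(1 - \exp^{-\gamma t}\bigr).
\]
The first term is dominated by $|\tilde x_1(0)|$ for every $t \geq 0$, which gives uniform boundedness with respect to initial conditions; letting $t \to \infty$ the bound collapses to the ultimate bound $\frac{1}{\gamma}|a|$. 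Boundedness of $|\hat x_1|$ then follows from $\hat x_1 = x - \tilde x_1$ and the standing hypothesis that $x$ is bounded.

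I do not expect a genuine obstacle here: the only step needing care is the observation that $\pi_y x = 0$ for all $a$, so that the bias appears solely as an additive forcing term. This reduces the claim to the textbook bounded-input/bounded-state estimate for an exponentially stable linear time-varying system and allows direct reuse of the transition-matrix bound \eqref{transition-bounds} already established.
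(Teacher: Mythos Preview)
Your proposal is correct and follows essentially the same route as the paper: derive the error dynamics $\dot{\tilde x}_1 = a - k\pi_y \tilde x_1$, apply variation of constants with the transition-matrix bound \eqref{transition-bounds}, and deduce both the uniform bound $|\tilde x_1(0)| + \frac{1}{\gamma}|a|$ and the ultimate bound $\frac{1}{\gamma}|a|$, then conclude boundedness of $\hat x_1$ from that of $x$. Your version is in fact slightly more explicit than the paper's, as you spell out why $\pi_y x = 0$ (so that the bias enters only additively) and carry the integral computation through to $\frac{|a|}{\gamma}(1 - e^{-\gamma t})$.
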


\begin{proof}\label{x-1}
It is straightforward to verify that, in this case, the error-system equation is:
\begin{equation}\label{tilde-x}
\dot{\tilde{x}}_1 =a-k \pi_y \tilde{x}_1,
\end{equation}
whose general solution is:
\[\tilde{x}_1(t)=\Phi(t,0)\tilde{x}_1(0) +\int_{0}^{t}\Phi(\tau,t) a d \tau\]
Using \eqref{transition-bounds} it follows that $|\tilde{x}_1(t)| \leq (|\tilde{x}_1^0| +\frac{1}{\gamma}|a|)$ and $\varlimsup_{t\rightarrow +\infty}|\tilde{x}_1(t)| \leq \frac{1}{\gamma}|a|$. Since $x$ is bounded by definition, it follows that $\hat{x}_1$ is also bounded.
\end{proof}

For the design of an exponentially stable observer in the case where $a \neq 0$ the following two technical lemmas are instrumental.

\begin{lemma}
 \label{th:complementary_boundedness}
 Assume that $y \in S^{n-1}$ is persistently exciting. The matrix-valued function $M(t)$ solution to:
  \begin{equation}\label{M}
  \dot{M}=I -k \pi_y M, \;\;\;M(0)=M(0)^T=M^0>0
  \end{equation}
  is bounded and always invertible, and its condition number is bounded.
\end{lemma}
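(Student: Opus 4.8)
The plan is to treat the three assertions in order, reducing invertibility and the condition-number bound to a single scalar estimate on $\det M$.

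First I would obtain boundedness by variation of constants. Writing $\Phi(t,\tau)$ for the transition matrix of the homogeneous system $\dot z = -k\pi_y z$ of Lemma~\ref{th:complementary_convergence}, the solution of \eqref{M} is $M(t) = \Phi(t,0)M^0 + \int_0^t \Phi(t,\tau)\,\mathrm{d}\tau$. The upper bound in \eqref{transition-bounds}, $\|\Phi(t,\tau)\| \le e^{-\gamma(t-\tau)}$, then gives $\|M(t)\| \le e^{-\gamma t}\|M^0\| + \tfrac{1}{\gamma}(1-e^{-\gamma t}) \le \|M^0\| + \tfrac{1}{\gamma}$, so the largest singular value $\sigma_{\max}(M(t))$ is uniformly bounded. (Equivalently, each column obeys $\dot m_i = e_i - k\pi_y m_i$ and is bounded by Lemma~\ref{lemma:boundedness} with $|a|=1$.)

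Next I would derive a scalar differential equation for $d(t) := \det M(t)$. By Jacobi's formula and \eqref{M}, $\dot d = \tr(\mathrm{adj}(M)\dot M) = \tr(\mathrm{adj}(M)) - k\,\tr(\mathrm{adj}(M)\pi_y M)$; using $M\,\mathrm{adj}(M)=d\,I$ and cyclicity of the trace, the last term is $\tr(\pi_y M\,\mathrm{adj}(M)) = d\,\tr(\pi_y) = k^{-1}\!\cdot k(n-1)d$, whence
\begin{equation}\label{eq:detODE}
\dot d = \tr(\mathrm{adj}(M)) - k(n-1)\,d .
\end{equation}
This is the engine of the proof. Two observations turn a lower bound on $d$ into the full statement. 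Since $d=\prod_i\sigma_i(M)$ one has $\sigma_{\min}(M) \ge d/\sigma_{\max}(M)^{n-1}$, so together with the bound on $\sigma_{\max}$ from the first step a uniform positive lower bound on $d$ yields both invertibility and a bound on the condition number $\kappa(M)=\sigma_{\max}/\sigma_{\min}$. Moreover \eqref{eq:detODE} is scalar and linear in $d$ with source $g(t):=\tr(\mathrm{adj}(M(t)))$ and $d(0)=\det M^0>0$; if $g(t)\ge g_0>0$ for all $t$, then $d(t) \ge \min\{\det M^0,\, g_0/(k(n-1))\}>0$ uniformly, which is exactly what is needed.

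The whole argument therefore reduces to the uniform positivity of $g(t)$, the $(n-1)$-th elementary symmetric function of the eigenvalues of $M$, and this is the step I expect to be the main obstacle. It is nonnegative whenever the spectrum of $M$ lies in the closed right half-plane (a Hurwitz-type sign condition on the coefficients of $\det(sI+M)$), so the task is to show that the constant forcing $I$, acting against the rank-deficient damping $-k\pi_y$, keeps the eigenvalues of $M$ away from the imaginary axis uniformly in $t$. A pointwise differential inequality for $\sigma_{\min}$ does \emph{not} suffice, because the forcing can momentarily decrease $\sigma_{\min}$ when $Mb$ anti-aligns with the minimal singular direction $b$ (there $\tfrac{d}{dt}\sigma_{\min}^2 = 2b^{\top}Mb - 2k|\pi_y Mb|^2$ with $2b^{\top}Mb$ as negative as $-2\sigma_{\min}$); the persistence of excitation \eqref{eq:persistence1} must be invoked over windows of length $\delta$. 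A convenient vehicle is the factorisation $M(t)=Q(t)\,C(t)$ with $Q$ the transition matrix of $\dot Q = -k\pi_y Q$, $Q(0)=I$, and $C(t)=M^0+\int_0^t Q(\tau)^{-1}\,\mathrm{d}\tau$. Since $\det Q(t)=e^{-k(n-1)t}\neq 0$, the invertibility and conditioning of $M$ are inherited from $C$; and $G:=Q^{-1}Q^{-\top}$ satisfies $\dot G = 2k\,Q^{-1}\pi_y Q^{-\top}\succeq 0$, so the lower excitation bound $\int_t^{t+\delta}\pi_y\,\mathrm{d}\tau\ge\mu I$ forces $G$, and hence the accumulated integral in $C$, to grow in every direction, which is what ultimately bounds $C^{-1}$ and closes the estimate $g(t)\ge g_0>0$.
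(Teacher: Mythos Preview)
Your boundedness argument, the Jacobi identity $\dot d=\tr(\mathrm{adj}\,M)-k(n-1)d$, and the reduction of the condition-number bound to a uniform positive lower bound on $d=\det M$ all coincide with the paper's proof. The divergence is in how that lower bound is obtained, and this is where your proposal has a genuine gap.

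You propose to establish $g(t)=\tr(\mathrm{adj}\,M(t))\ge g_0>0$ by invoking persistence of excitation through the factorisation $M=QC$. The sketch is not completed, and as written it does not close: since $\det Q(t)=e^{-k(n-1)t}$, the factor $Q^{-1}$ (hence $C=M^0+\int_0^tQ^{-1}$) grows exponentially, and the monotone growth of $G=Q^{-1}Q^{-\top}$, which is governed by $\dot G=2kQ^{-1}\pi_yQ^{-\top}$, is a statement about $Q(t)^{-1}$ rather than about the \emph{integral} $\int_0^tQ^{-1}$ that enters $C$. Controlling $G$ therefore does not directly bound $C^{-1}$, and you are back to comparing the growth of $C$ in different directions, which is the original difficulty. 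Your intermediate claim that $g\ge0$ whenever the spectrum of $M$ lies in the closed right half-plane is correct, but you do not show that the spectrum stays there.

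The paper avoids all of this with a one-line observation that does \emph{not} use persistence of excitation at this stage. Write $g=\tr(\mathrm{adj}\,M)=d\,\tr(M^{-1})$ and apply the arithmetic--geometric mean inequality to the eigenvalues of $M^{-1}$: $\tr(M^{-1})\ge n\,d^{-1/n}$. Substituting into the determinant ODE yields the self-limiting differential inequality
\[
\dot d \;\ge\; n\,d^{\,1-1/n}-k(n-1)\,d,
\]
whose right-hand side is strictly positive whenever $d<\bigl(n/(k(n-1))\bigr)^{n}$. Thus $d$ is ultimately bounded below by this constant, and can never reach zero in finite time (on any interval where $d>0$ one already has $\dot d\ge -k(n-1)d$, so $d(t)\ge d(0)e^{-k(n-1)t}$). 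Together with the bound on $\|M\|$ and your own inequality $\sigma_{\min}\ge d/\sigma_{\max}^{\,n-1}$ (the paper quotes the equivalent estimate $\kappa(M)\le (2/d)(\|M\|_F/\sqrt n)^n$ from \cite{Piazza2002141}), this closes the proof. Persistence of excitation is used only for boundedness of $M$; the invertibility and the lower bound on $\det M$ follow purely from the structure of \eqref{M} via AM--GM.
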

\begin{proof} See appendix \ref{app}.
\end{proof}

\begin{lemma} \label{persis2}
Assume that $y \in S^{n-1}$ is persistently exciting and $\dot{y}$ is uniformly continous.
The dual output $y^\star:=\frac{M^{-1}y}{|M^{-1}y|}$ is also persistently exciting.
\end{lemma}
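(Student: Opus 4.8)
The plan is to verify the derivative characterisation \eqref{eq:persistence4} of Lemma~\ref{lem:PE2} for $y^\star$, rather than to attack the integral condition \eqref{eq:persistence1} head on; the latter would force me to control $\int |\pi_{y(\tau)}M(\tau)b|^2\,\mathrm{d}\tau$ against a \emph{time-varying} test direction $M(\tau)b$, which is awkward because the persistence of $y$ only controls integrals against fixed directions. To be allowed to invoke Lemma~\ref{lem:PE2} for $y^\star$, I first record that $\dot y^\star$ is uniformly continuous and bounded; this is routine once Lemma~\ref{th:complementary_boundedness} is available, since $M$, $M^{-1}$ and $\dot M=I-k\pi_y M$ are then bounded and uniformly continuous, and $\dot y$ is bounded and uniformly continuous by hypothesis.

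The crux is a structural identity. Writing $z:=M^{-1}y$, so that $y^\star=z/|z|$ and $My^\star=y/|z|$, the defining equation \eqref{M} gives $\dot M y^\star = y^\star - k\pi_y(My^\star) = y^\star - \tfrac{k}{|z|}\pi_y y = y^\star$, since $\pi_y y=0$. Hence $\frac{d}{dt}(My^\star)=\dot M y^\star + M\dot y^\star = y^\star + M\dot y^\star$. This is the key identity: it couples $\dot y^\star$ to the (bounded) quantity $My^\star$ with a forcing term $y^\star$ of unit length.

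With this identity the proof of \eqref{eq:persistence4} for $y^\star$ is by contradiction and uses only the boundedness of $M$ from Lemma~\ref{th:complementary_boundedness} (say $\|M(t)\|\le c_M$ for all $t$); the persistence of $y$ enters the argument solely through that lemma. Fix the window length $\delta^\star:=3c_M$ and suppose, for some $t_0$, that $|\dot y^\star(\tau)|<\varepsilon^\star$ for all $\tau\in[t_0,t_0+\delta^\star]$. Then $y^\star$ is almost constant, $|y^\star(\tau)-y^\star(t_0)|\le\varepsilon^\star\delta^\star$, so pairing with $y^\star(t_0)$ and using $\langle a,b\rangle=1-\tfrac12|a-b|^2$ for unit vectors gives $\bigl|\int_{t_0}^{t_0+\delta^\star}y^\star\,\mathrm{d}\tau\bigr|\ge \delta^\star(1-\tfrac12\varepsilon^{\star2}\delta^{\star2})$. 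Integrating the key identity, the vector $M(t_0+\delta^\star)y^\star(t_0+\delta^\star)-M(t_0)y^\star(t_0)$ therefore has norm at least $\delta^\star(1-\tfrac12\varepsilon^{\star2}\delta^{\star2})-c_M\varepsilon^\star\delta^\star$ (using $|M\dot y^\star|\le c_M\varepsilon^\star$), whereas it is at most $2c_M$ since $|My^\star|\le c_M$. For $\delta^\star=3c_M$ this amounts to $\tfrac{27}{2}c_M^2\varepsilon^{\star2}+3c_M\varepsilon^\star\ge 1$, which fails once $\varepsilon^\star$ is chosen small enough. The contradiction shows that every window of length $\delta^\star$ contains a time $\tau$ with $|\dot y^\star(\tau)|\ge\varepsilon^\star$, so $y^\star$ satisfies \eqref{eq:persistence4}, and Lemma~\ref{lem:PE2} yields that $y^\star$ is persistently exciting.

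The main obstacle is isolating and exploiting the identity $\dot M y^\star=y^\star$: it is what turns ``$y^\star$ almost constant'' into ``$My^\star$ grows linearly'', a growth the boundedness of $M$ forbids. A naive alternative is to bound $|\dot y^\star|$ pointwise below by a multiple of $|\dot y|$, but the explicit formula $\dot y^\star=\frac{1}{|z|}\pi_{y^\star}M^{-1}\dot y-\pi_{y^\star}M^{-1}y^\star$ carries a bounded additive term $\pi_{y^\star}M^{-1}y^\star$ that can cancel the first term at isolated instants, so no clean pointwise comparison between $|\dot y^\star|$ and $|\dot y|$ is available; the accumulation argument above is precisely what sidesteps this difficulty.
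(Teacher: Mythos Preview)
Your proof is correct and follows essentially the same route as the paper's: both derive the key identity $\frac{d}{dt}(My^\star)=y^\star+M\dot y^\star$ (using $\pi_y(My^\star)=0$), then argue by contradiction that if $|\dot y^\star|$ stayed small on a window then $My^\star$ would have to grow beyond the bound $\|M\|$ supplied by Lemma~\ref{th:complementary_boundedness}, and conclude via Lemma~\ref{lem:PE2}. Your write-up is in fact slightly more careful than the paper's, since you explicitly check the uniform continuity and boundedness of $\dot y^\star$ needed to invoke Lemma~\ref{lem:PE2}, and you work directly on an arbitrary window rather than on $[0,t_1]$ followed by a ``repeat on each interval'' remark.
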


\begin{proof} See appendix \ref{ystarexcitation}.
\end{proof}

\noindent The observer design presented hereafter is based on the
association of the filter \eqref{filter} that ensures, as we will show, that the variable $z:= \hat{x}_1 +Ma$ converges to $x$, with a second filter that provides an estimate $\hat{z}^{\star}$ of $z^\star:=M^{-1}z$. It then suffices to pre-multiply this second estimate by $M$ to obtain an estimate of $x$. The following theorem specifies the observer design and its convergence properties in the case where the output $y$ is persistently exciting.

\begin{theorem}
 Consider the system~(\ref{eq:Kinematics}-\ref{eq:output}) along with \eqref{filter} and \eqref{M}.
 Define the virtual observer $z$ as follows
 \begin{equation}
 z:=\hat{x}_1 +Ma
 \end{equation}
 and the dual observer $\hat{z}^\star$ of $z^\star:=M^{-1}z$ as follows
 \begin{equation}\label{x*}
\dot{\hat{z}}^\star=v^\star -k^\star \pi_{y^\star}\hat{z}^\star,\;\; \hat{z}^\star(0)=\hat{z}^{\star}_0
 \end{equation}
 with $v^\star:=M^{-1}\left(v-M^{-1}\hat{x}_1\right)$ a known term and $k^\star$ any positive gain. If $y$ is persistently exciting in the sense of Lemma \ref{lem:PE2} then
the virtual error $\tilde{x}_z:=x-z$, the position error $\tilde{x}=x-M\hat{x}^\star$, and the adaptation error $\hat{a}-a$ (with $\hat{a}:=\hat{z}^\star-M^{-1}\hat{x_1}$), globally exponentially converge to zero.
 \end{theorem}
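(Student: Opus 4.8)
The plan is to establish the three claimed limits in cascade, reducing the whole argument to two applications of Lemma~\ref{th:complementary_convergence} together with the boundedness of $M$ and $M^{-1}$ furnished by Lemma~\ref{th:complementary_boundedness}. First I would treat the virtual error $\tilde{x}_z=x-z=\tilde{x}_1-Ma$. Since $a$ is constant, $\frac{d}{dt}(Ma)=\dot{M}a=a-k\pi_y Ma$ by \eqref{M}, while $\dot{\tilde{x}}_1=a-k\pi_y\tilde{x}_1$ by \eqref{tilde-x}. Subtracting, the forcing terms $a$ cancel exactly, leaving
\[
\dot{\tilde{x}}_z=-k\pi_y\tilde{x}_z,
\]
which is precisely the homogeneous system of Lemma~\ref{th:complementary_convergence}. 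Hence $\tilde{x}_z$ is UGES, i.e. $x-z\to 0$ exponentially at the rate $\gamma$ of \eqref{transition-bounds}.

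The key computation is to show that $z^\star=M^{-1}z=M^{-1}\hat{x}_1+a$ obeys \emph{exactly} $\dot{z}^\star=v^\star$. Differentiating $M^{-1}\hat{x}_1$ via $\frac{d}{dt}M^{-1}=-M^{-1}\dot{M}M^{-1}$ and inserting \eqref{M} and \eqref{filter}, the two terms $\pm kM^{-1}\pi_y\hat{x}_1$ cancel and what survives is $M^{-1}v-M^{-2}\hat{x}_1=v^\star$. Consequently the dual error $\tilde{z}^\star:=z^\star-\hat{z}^\star$ satisfies
\[
\dot{\tilde{z}}^\star=-k^\star\pi_{y^\star}\tilde{z}^\star+k^\star\pi_{y^\star}z^\star.
\]
The crucial observation is that the residual forcing is exponentially small. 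Writing $z=x-\tilde{x}_z$ and using $x=|x|y$, one has $M^{-1}x=|x|\,|M^{-1}y|\,y^\star$, so $M^{-1}x$ is collinear with $y^\star$ and is annihilated by $\pi_{y^\star}$; therefore $\pi_{y^\star}z^\star=-\pi_{y^\star}M^{-1}\tilde{x}_z$, which tends to $0$ exponentially because $\tilde{x}_z$ does and $M^{-1}$ is bounded by Lemma~\ref{th:complementary_boundedness}.

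To conclude, I would note that the homogeneous part $\dot{\tilde{z}}^\star=-k^\star\pi_{y^\star}\tilde{z}^\star$ is again of the form treated in Lemma~\ref{th:complementary_convergence}, and is UGES because $y^\star$ is persistently exciting by Lemma~\ref{persis2} (the hypothesis that $y$ is persistently exciting in the sense of Lemma~\ref{lem:PE2} supplies the uniform continuity of $\dot{y}$ needed there). A standard variation-of-constants estimate then shows that a UGES linear system driven by an exponentially vanishing input still converges exponentially, so $\tilde{z}^\star\to 0$ exponentially. From $\hat{a}-a=-\tilde{z}^\star$ this gives exponential convergence of the adaptation error, and from $\tilde{x}=x-M\hat{z}^\star=\tilde{x}_z+M\tilde{z}^\star$, together with the boundedness of $M$, exponential convergence of the position error.

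I expect the main obstacle to be the exact identity $\dot{z}^\star=v^\star$ and the accompanying cancellation of the $\pi_y$ terms: this is what makes the dual filter implementable with a \emph{known} $v^\star$ while leaving only an exponentially decaying coupling from the first stage. The remaining work is the (standard but essential) verification that a cascade of one UGES subsystem into another, with exponentially decaying interconnection, preserves exponential stability; here one must check that all constants — the bounds on $M$ and $M^{-1}$ and the persistence parameters of $y^\star$ — are uniform in time, which is exactly what Lemmas~\ref{th:complementary_boundedness} and~\ref{persis2} guarantee.
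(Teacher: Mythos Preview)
Your argument is correct and mirrors the paper's proof almost exactly: the same cascade structure (first $\dot{\tilde{x}}_z=-k\pi_y\tilde{x}_z$, then $\dot{z}^\star=v^\star$, then $\dot{\tilde{z}}^\star=-k^\star\pi_{y^\star}\tilde{z}^\star-k^\star\pi_{y^\star}M^{-1}\tilde{x}_z$), the same use of $\pi_{y^\star}M^{-1}x=0$, and the same appeal to Lemmas~\ref{th:complementary_convergence}, \ref{th:complementary_boundedness} and~\ref{persis2}. Your derivations of $\dot{\tilde{x}}_z$ (via $\tilde{x}_1-Ma$) and of $\dot{z}^\star=v^\star$ (via the explicit product-rule cancellation) are in fact slightly more detailed than the paper's, which simply states these identities.
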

\begin{proof}
The proof proceeds step by step. Concerning the convergence of $\tilde{x}_z$ to zero, one easily verifies, using  (\ref{filter}-\ref{M}), that:
\begin{equation}\label{dot-z}
\dot{z}=v+a +k \pi_y z
\end{equation}
Differentiating $\tilde{x}_z$, and using  \eqref{eq:Kinematics} and \eqref{dot-z}, one obtains:
\[
    \dot{\tilde{x}}_z =k \pi_y z=-k \pi_y \tilde{x}_z\]
This equation being the same as the one for $\tilde{x}_1$ in the case where $a=0$, one concludes as in Lemma \ref{th:complementary_convergence} that $\tilde{x}_z =0$ is uniformly globally exponentially stable, provided that $y$ is persistently exciting.

Concerning the convergence of $\tilde{x}$ to zero,
differentiating $z^\star:=M^{-1} z$, and using \eqref{filter} and \eqref{M}, yields:
\begin{equation}\label{dot-z*}
\dot{z}^\star=v^\star
\end{equation}
Now, differentiating $\tilde{z}^\star=z^\star-\hat{z}^\star$, and using \eqref{x*} and \eqref{dot-z*}, it comes that:
\begin{align*}
\dot{\tilde{z}}^\star &=k^\star \pi_{y^\star}\hat{x}^\star \\
&=-k^\star \pi_{y^\star}(\tilde{z}^\star-z^\star)
\end{align*}
Since $z^\star=M^{-1} z$, $\tilde{x}_z=x-z$, and $\pi_{y^\star}M^{-1}x \equiv 0$, one easily verifies that $\pi_{y^\star}z^\star=-\pi_{y^\star}M^{-1}\tilde{x}_z$. Therefore:
\begin{align*}
\dot{\tilde{z}}^\star&=-k^\star \pi_{y^\star} \tilde{z}^\star -k^\star \pi_{y^\star}M^{-1}\tilde{x}_z
\end{align*}
Since $y^\star$ is a persistently exciting (from Lemma \ref{persis2}), the above equation is similar to the one of $\tilde{x}_1$ in the case where $a=0$, except for the additive "perturbation" term $-k^\star \pi_{y^\star}M^{-1}\tilde{x}_z$ which converges exponentially to zero, due to the exponential convergence of $\tilde{x}_z$ to zero. It is immediate to show that this exponentially vanishing perturbation does not prevent $\tilde{z}^\star$ from globally converging to zero exponentially.
Since $\tilde{x}=M\tilde{z}^\star+\tilde{x}_z$, and since $\tilde{z}^\star$ and $\tilde{x}_z$ globally converge to zero exponentially, the position error $\tilde{x}$ also globally converge exponentially to zero.

Finally, using the definition of $\hat{a}$, one gets $a-\hat{a}=\tilde{z}^\star$ whose exponential convergence to zero has already been established.
\end{proof}

\section{Simulation}\label{sec:simulations}
We consider the example of a moving target point observed by a camera. The point moves in the 3D space ($n=3$) along a circular trajectory at a fixed altitude ($z=3m$) above the ground. The frame associated to the camera is located at the origin of the inertial frame whose optical axis is aligned with the
$z$-axis and looks up at the moving point. The measure $y \in S^2$ corresponds to the spherical projection of the point, given by the algebraic transformation $y=\frac{y^P}{|y^P|}$, where $y^P$ is the projective measure provided by the camera. The measurement of the velocity $v$ is biased by $a=(0.33,\;0.66,\;0.99)^\top$, and $v$ is chosen so that  $v+a=(-0.5\sin 0.5t, \; 0.5\cos 0.5t, \;0)^\top$. The following values of the observer gains are used: $k=0.5$ and $k^\star=5$.
\begin{figure}[h!]
\begin{center}
\includegraphics[scale=1.15]{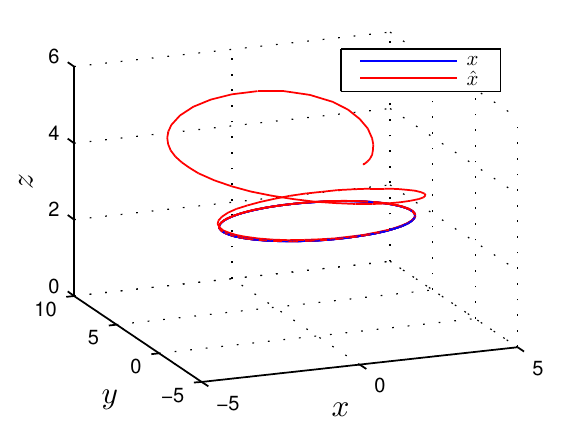}
\caption{Evolution of the system/observer pair in 3D space} \label{fig1}
\end{center}
\end{figure}
\begin{figure}[h!]
\begin{center}
\includegraphics[scale=1.15]{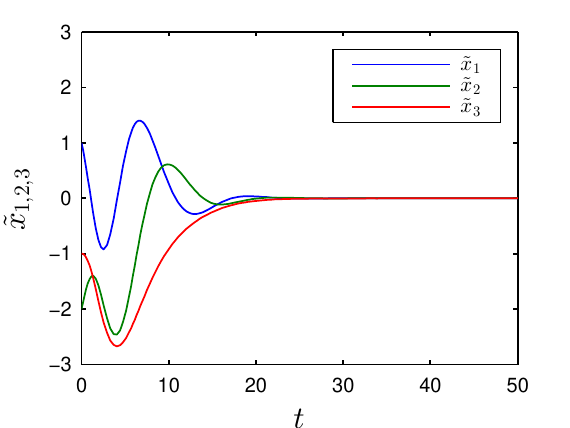}
\caption{Evolution of the observer error with respect to time} \label{fig2}
\end{center}
\end{figure}
\begin{figure}[h!]
\begin{center}
\includegraphics[scale=1.15]{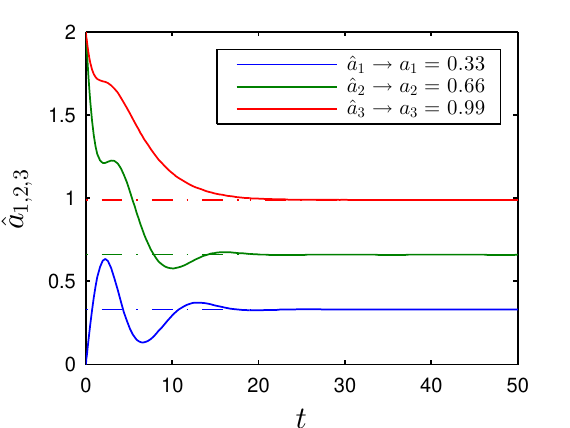}
\caption{Evolution of the estimate $\hat{a}$ with respect to time}\label{fig3}
\end{center}
\end{figure}

In Figures (\ref{fig1}-\ref{fig3}) the performance of the observer in the ideal noise-free case is shown. From these figures one can observe the exponential convergence of all estimation errors to zero. In figures (\ref{fig1b}-\ref{fig3b}), the observer algorithm is simulated in the case where the 3D bearing measurement $y$ is calculated from the position $x$ to which a uniform noise $w$ taking values in the interval $[-0.5m,0.5m]$ is added. Figures (\ref{fig1b}-\ref{fig3b}) show that the high frequency part of the noise is filtered by the proposed algorithm so that the performance of the proposed observer is not much reduced.
\begin{figure}[h!]
\begin{center}
\includegraphics[scale=1.15]{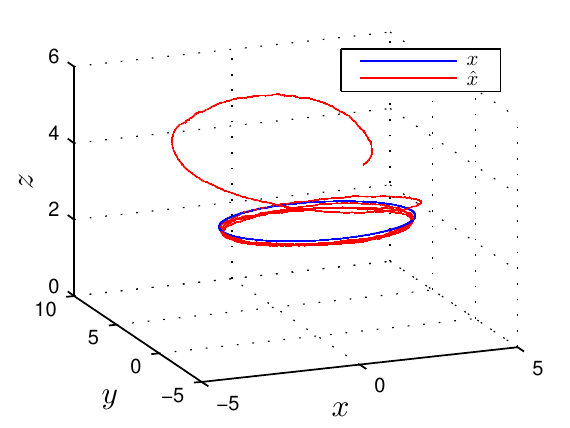}
\caption{Evolution of the system/observer pair when the output is affected by a noise}\label{fig1b}
\label{sim1}
\end{center}
\end{figure}
\begin{figure}[h!]
\begin{center}
\includegraphics[scale=1.15]{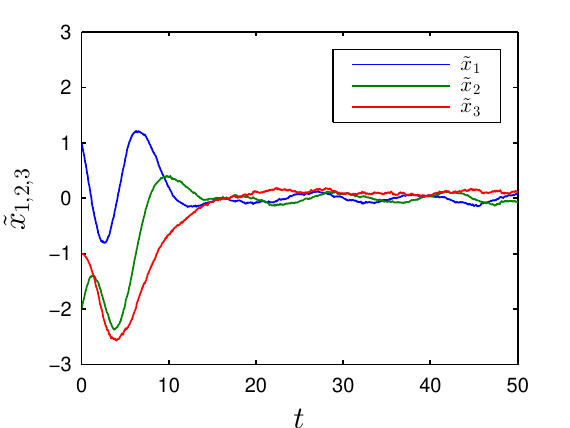}
\caption{Evolution of the observer error in the presence of measurement noise }\label{fig2b}
\label{sim1}
\end{center}
\end{figure}
\begin{figure}[h!]
\begin{center}
\includegraphics[scale=1.15]{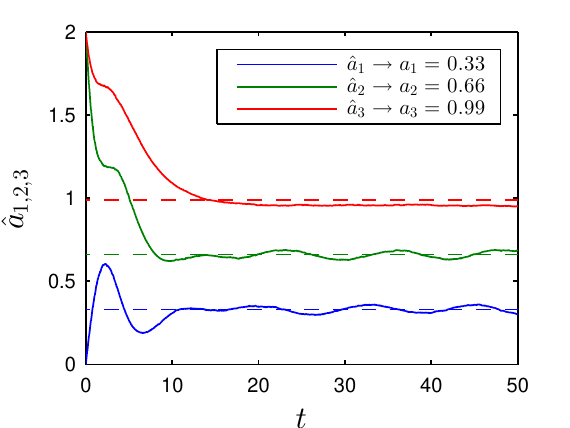}
\caption{Evolution of the estimate $\hat{a}$ in the presence of measurement noise}\label{fig3b}
\label{sim1}
\end{center}
\end{figure}
\vspace{-0.2cm}
\section{Concluding remarks}\label{sec:conclusions}
In this paper, we discussed the issue of observability of a moving object in $\rr^n$ from bearing measurement, proposed a nonlinear observer of the object's position in the case where the measured velocity of the object is biased, and carried out a detailed analysis of this observer in the case where the bearing measurement satisfies a condition of persistent excitation. There is an increasing number of emerging applications that can make use of such an observer. We think, in particular, of applications involving cameras for relative localization of mobile robot teams. We believe that extending the observer design methodology described in the paper to the estimation of the relative pose between to mobile objects evolving in $SE(n)$, with applications in $SE(3)$, is possible. This is one of future extensions of this work.

\vspace{-0.2cm}

\section{Acknowledgments}

This work was supported by the ANR-ASTRID project SCAR ``Sensory
Control of Unmanned Aerial Vehicles'', the ANR-Equipex project ''Robotex'' and the Australian Research Council through
the ARC Discovery Project DP120100316 "Geometric Observer Theory for
Mechanical Control Systems".

\appendix
\setcounter{section}{0}
\subsection{Proof of lemma \ref{lem:PE2}} \label{Plem:PE2}
Let us first show that \eqref{eq:persistence2} implies \eqref{eq:persistence4}.\\
For $\tau \in [t,t+\delta]$ one has
\[
\begin{array}{l}
|b^Ty(\tau)|^2 = |b^T(y(t)+\left( \begin{array}{c}\dot{y}_1(s_1) \\ \vdots \\ \dot{y}_n(s_n)\end{array}\right)(\tau-t)|^2\\
\geq |b^T(y(t)|^2-2n d(\delta)(\tau-t)-4n^2d(\delta)^2(\tau-t)^2
\end{array}
\]
for some $s_i\in[t,\tau]$ ($i=1,\ldots,n$) and $d(\delta)=\sup_{\tau \in [t,t+\delta]}  |\dot{y}(\tau)|$. Choose $b=y(t)$ so that $b^Ty(t)=1$, then
\[
|b^Ty(\tau)|^2 \geq 1-2n d(\delta)(\tau-t)-4n^2d(\delta)^2(\tau-t)^2
\]
and
\[
\int_t^{t+\delta} |b^Ty(\tau)|^2 \mathrm{d}\tau \geq \delta-nd(\delta) \delta^2-\frac{4}{3}n^2d(\delta)^2\delta^3
\]
Clearly there exists $\epsilon>0$ (independent of $t$) such that
\[
d(\tau)\leq \epsilon ~\Rightarrow ~ \int_t^{t+\delta} |b^Ty(\tau)|^2 \mathrm{d}\tau\geq \delta-\mu
\]
Let us proceed by contradiction and assume that \eqref{eq:persistence4} does not hold, i.e. $|\dot{y}(\tau)|<\epsilon$, $\forall t \in [t,t+\delta]$, then $d(\delta)<\epsilon$ and $\int_t^{t+\tau} |b^Ty(\tau)|^2 d\tau \geq \delta - \mu$. This contradicts \eqref{eq:persistence2} according to which
\[
\forall b \in S^{n-1}~:~\int_{t}^{t+\delta}|\pi_{y(\tau)}b|^2 \mathrm{d}\tau = \delta-\int_t^{t+\delta} |b^Ty(\tau)|^2 \mathrm{d}\tau \geq \mu
\]
Therefore \eqref{eq:persistence4} holds true.\\~\\
We now show that \eqref{eq:persistence4} implies \eqref{eq:persistence2}\\
Using the (assumed) uniform continuity of $\dot{y}$, \eqref{eq:persistence4} implies the existence of an interval $[t_1,t_2] \subset [t,t+\delta]$ such that $(t_2-t_1)=\epsilon_1(\epsilon)>0$ and $1.5>|y(t_2)-y(t_1)| \geq \epsilon_2(\epsilon):=\frac{\epsilon}{2n}\epsilon_1(\epsilon)>0$.
Now, $|b^Ty(t)|^2 =\cos(\theta(t,b))^2$ with $\theta(t,b)$ denoting the angle between $b$ and $y(t)$. The previous inequality in turn implies that $\cos(\theta(t_1,b))^2$ and $\cos(\theta(t_2,b))^2$ cannot be both equal to one. Therefore, $\cos(\theta(t_1,b))^2=1-\nu_1(\epsilon,b)$ and $\cos(\theta(t_2,b))^2=1-\nu_2(\epsilon,b)$, with $\max(\nu_1,\nu_2)\geq \nu(\epsilon,b) >0$. Since $\nu(\epsilon,.)$ is a continuous function depending on variables that take values in the compact set $S^{n-1}$ it reaches its bounds. This implies that $\nu(\epsilon,b)>\bar{\nu}(\epsilon)>0$, $\forall b\in S^{n-1}$. Using the fact that the uniform boundedness of $\dot{y}$ yields the uniform boundedness of $\dot{\theta}$ and thus of $\frac{d}{dt}\cos(\theta)^2$, this in turn implies that $\forall b \in S^{n-1}$:
\[
\int_{t_1}^{t_2} |b^Ty(\tau)|^2 \mathrm{d}\tau=\int_{t_1}^{t_2} \cos(\theta(\tau,b))^2 \mathrm{d}\tau\leq (t_2-t_1)(1-\epsilon_3(\epsilon))
\]
with $\epsilon_3(\epsilon)>0$. Therefore, $\forall b \in S^{n-1}$:
\[
\begin{array}{lll}
\int_t^{t+\delta} |b^Ty(\tau)|^2 \mathrm{d}\tau & = & \int_t^{t_1} |b^Ty(\tau)|^2 \mathrm{d}\tau+\int_{t_1}^{t_2} |b^Ty(\tau)|^2 \mathrm{d}\tau \\
~ & ~ &+\int_{t_2}^{t+\delta} |b^Ty(\tau)|^2 \mathrm{d}\tau\\
~ & \leq & (t_1-t)+(t_2-t_1)(1-\epsilon_3(\epsilon))\\
~ & ~ & +(t+\delta-t_2)\\
~ & \leq & \delta - \mu(\epsilon)
\end{array}
\]
with $\mu(\epsilon)=\epsilon_1(\epsilon)\epsilon_3(\epsilon)>0$.

\subsection{Proof of Lemma \ref{th:complementary_boundedness}} \label{app}
To prove that $M$ is bounded, is suffices to ensure that, for any constant vector $b \in S^{n-1}$, $|Mb|$ is bounded. Define $u=Mb$, it follows that:
\[\dot u= b-k\pi_y u\]
This equation is similar to the equation \eqref{tilde-x} of $\tilde{x}_1$. Therefore
\[
|M(t)b| \leq  |M(0)b| +\frac{1}{\gamma},\;\;\forall b \in S^{n-1}
\]
To show that $M$ is an invertible matrix, define $\Delta:=\det(M)$. From Jacobi's formula, one has
\begin{align}
\dot{\Delta} & = \Delta \tr(M^{-1}\dot{M}) \notag\\
&= \Delta \tr(M^{-1}-kM^{-1}\pi_yM)\notag\\
&= \Delta \tr(M^{-1})-k \Delta \tr(\pi_yMM^{-1})\notag\\
&= -k(n-1)\Delta + \Delta \tr(M^{-1}) \label{delta}
\end{align}
Note that this equation holds even if $M$ is not invertible. Indeed, using the fact that $\det(M)=\prod_{i=1}^{n}\lambda_i$  and $\tr(M^{-1})= \sum_{i=1}^{n} \frac{1}{\lambda_i}$, with $\lambda_i$ ($i=1 \ldots n$) the eigenvalues of $M$, one verifies that
\begin{equation} \label{trM}
\dot{\Delta}  =-k(n-1)\Delta+\sum_{i=1}^{n}\prod_{j=1,j\neq i}^{n}\lambda_j
\end{equation}
Since $M(0)$ is symmetric positive definite by assumption, all eigenvalues of $M(0)$ are positive and $\Delta(0)>0$.
Assume that $\Delta$ is equal to zero for the first time at the time instant $t_0>0$. Then, $\tr(M(t))>0$
on $[0,t_0)$ and $\tr(M(t))\geq 0$. In view of \eqref{trM}, $\Delta(t)\geq r(t)$ with $r(t)$ the solution to the equation $\dot{r}= -k(n-1)r$, with $r(0)=\Delta(0)$. Therefore $\Delta(t)\geq \Delta(0) \exp(-k(n-1)t)>0$, $\forall t$. This contradicts the existence of $t_0$ and proves that $M(t)$ is always invertible.\\
Let us now prove that $\Delta(t)$ is lower bounded by a positive number. Rewrite equation \eqref{delta} as follows
\begin{equation}\label{delta-1}
\dot{\Delta}  =-\left(\tr(M^{-1})-(n-1)k \right)\Delta,
\end{equation}
Using the fact that $\tr(M^{-1})>\frac{n}{\Delta^{1/n}}$, this equation shows that $\dot{\Delta} \geq 0$ if $\Delta<(\frac{n}{k(n-1)})^{1/n}$. Therefore $\Delta$ is ultimately lower bounded by $(\frac{n}{k(n-1)})^{1/n}$.

Finally, since $\Delta$ is lower bounded by a positive number and $M$ is upper bounded, it follows (by direct application of \cite{Piazza2002141}) that the condition number
\[\kappa(M)=||M||.||M^{-1}|| \leq \frac{2}{\Delta} \left (\frac{||M||_F}{\sqrt{n}}\right)^n\]
is upper bounded.

\subsection{Proof of Lemma \ref{persis2}} \label{ystarexcitation}
From the equation \eqref{M} of M one gets
\[
\frac{d}{dt}(My^{\star})=y^{\star}+M\dot{y}^{\star}
\]
Therefore
\[
(My^{\star})(t)=(My^{\star})(0)+\int_0^t (y^{\star}+My^{\star})(\tau) \mathrm{d}\tau
\]
and
\[
|(My^{\star})(t)|\geq \left |\int_0^t (y^{\star}+My^{\star})(\tau) \right |-|(My^{\star})(0)|
\]
Define:
\begin{itemize}
\item $k_M:=\sup_{t\in [0,+\infty[} ||M(t)||$, which implies that $|(My^{\star})(t)|\leq k_M$, $\forall t$,
\item $c(t):=\sup_{\tau\in [0,t]}|\dot{y}^{\star}(\tau)|$,
\item $w(t):=(y^{\star}+My^{\star})(t)$.
\end{itemize}
 One has
$w(\tau)=y^{\star}(0)+\left( \begin{array}{c} \dot{y}_1^{\star}(s_1)\\ \vdots \\ \dot{y}_1^{\star}(s_n) \end{array} \right) \tau + (My^{\star})(\tau)$ for some $s_i \in [0,\tau]$ ($i=1,\ldots,n$). Therefore
\[
\begin{array}{lll}
|\int_0^t w(\tau) \mathrm{d}\tau| & \geq &|\int_0^t  y^{\star}(0) \mathrm{d}\tau|-\left |\int_0^t \left( \begin{array}{c} \dot{y}_1^{\star}(s_1)\\ \vdots \\ \dot{y}_1^{\star}(s_n) \end{array} \right) \tau \mathrm{d}\tau \right| \\
~ & ~ & -|\int_0^t (My^{\star})(\tau)\mathrm{d}\tau|\\
~ & \geq & f(t,t_1) ~~;~t_1\geq t
\end{array}
\]
with $f(t,t_1):=t-nc(t_1)\frac{t^2}{2}-k_M c(t_1)t$. The function $f(.,t_1)$ monotically increases for $0 \leq t \leq t_2=\frac{1-k_Mc(t_1)}{nc(t_1)}$ if $c(t_1)<\frac{1}{k_M}$. Define $g(c):=t^{\star}(c)-nc\frac{t^{\star}(c)^2}{2}-k_M ct^{\star}(c)$ with $t^{\star}(c)=\frac{1-k_Mc}{nc}$. The function $g$ decreases monotically on $[0,\frac{1}{k_M}]$ with $g(0)=+\infty$ and $g(1/k_M)=0$. Let $\bar{c}>0$ denote the value of $c$ such that $g(\bar{c})=2k_M$, and set $t_1=t_2=\frac{1-k_M\bar{c}}{n\bar{c}}$. If $c(t_1)=\sup_{\tau\in [0,t_1]}|\dot{y}^{\star}(\tau)|<\bar{c}$ then
$|\int_0^{t_1} w(\tau) \mathrm{d}\tau|>2 k_M$. Since $|(My^{\star})(t_1)| \geq |\int_0^{t_1} w(\tau) \mathrm{d}\tau|-|(My^{\star})(0)|$ with $|(My^{\star})(0)|<k_M$, one concludes that
$|(My^{\star})(t_1)|>k_M$ (contradiction). Therefore, $\sup_{\tau\in [0,t_1]}|\dot{y}^{\star}(\tau)| \geq \bar{c}$, which proves the existence of a time-instant $t_3 \in [0,t_1]$ such that $|\dot{y}^{\star}(t_3)|\geq \bar{c}$.\\
The same proof repeated on every interval $[kt_1,(k+1)t_1]$ ($k\in \nn$) shows that $|\dot{y}^{\star}|$ is periodically larger than $\bar{c}>0$. This establishes that $\dot{y}^{\star}$ is persistently exciting.

\bibliographystyle{IEEEtran}
\bibliography{bibfile3}
\end{document}